\title{Reasoning About Higher-Order Relational Specifications}
\author{
  Yuting Wang\\
  University of Minnesota, USA \\
  \texttt{yuting@cs.umn.edu}
  \and
  Kaustuv Chaudhuri \\
  INRIA, France \\
  \texttt{kaustuv.chaudhuri@inria.fr}
  \and
  Andrew Gacek \\
  Rockwell Collins, USA \\
  \texttt{andrew.gacek@gmail.com}
  \and
  Gopalan Nadathur \\
  University of Minnesota, USA \\
  \texttt{gopalan@cs.umn.edu}
}
\begin{document}

\maketitle

\begin{abstract}
  The logic of hereditary Harrop formulas (\HHw) has proven useful for
  specifying a wide range of formal systems that are commonly presented via
  syntax-directed rules that make use of contexts and side-conditions.
  The two-level logic approach, as implemented in the Abella theorem prover,
  embeds the \HHw specification logic within a rich reasoning logic that
  supports inductive and co-inductive definitions, an equality predicate, and
  generic quantification.
  Properties of the encoded systems can then be proved through the embedding,
  with special benefit being extracted from the transparent correspondence
  between \HHw derivations and those in the encoded formal systems.
  The versatility of \HHw relies on the free use of nested implications, leading
  to dynamically changing assumption sets in derivations.
  Realizing an induction principle in this situation is nontrivial and the
  original Abella system uses only a subset of \HHw for this reason.
  We develop a method here for supporting inductive reasoning over all of \HHw.
  Our approach relies on the ability to characterize dynamically changing
  contexts through finite inductive definitions, and on a modified encoding of
  backchaining for \HHw that allows these finite characterizations to be used in
  inductive arguments.
  We demonstrate the effectiveness of our approach through examples of
  formal reasoning on specifications with nested implications in an extended
  version of Abella.
\end{abstract}

\section{Introduction}
\label{sec:introduction}

\noindent%
We are concerned in this paper with the task of reasoning about formal
systems such as programming languages, proof systems and process calculi.
The data objects that are of interest within such systems
often embody binding constructs.
Higher-order abstract syntax (\HOAS) provides an effective means
for representing such structure.
In an \HOAS representation, which is based on using a well-calibrated
\lcalculus as a metalanguage, the binding structure of object language
expressions is encoded using abstractions in \lterms.
For example, consider an object language that is itself a \lcalculus.
Letting \ktm be a type for the representation of these terms, their \HOAS
encoding can be built around two constructors,
$\kapp : \ktm \to \ktm \to \ktm$ and $\kabs : (\ktm \to \ktm) \to \ktm$:
the object term $\LAM x. \LAM y. y~x$ would, for instance, be
represented as $\kabs~(\LAM x. \kabs~(\LAM y. \kapp~y~x))$.
Observe that there is no constructor for variables in this encoding; object-level
variables are directly represented by the variables of the meta-language, bound
by an appropriate abstraction.
The virtue of \HOAS is that if the metalanguage is properly chosen, \ie, if it
incorporates $\lambda$-conversion but is otherwise weak in a computational
sense, then it provides a succinct and logically precise treatment of
object-level operations such as substitution and analysis of binding structure.

Formal systems are usually defined by the relations that hold between the
data objects that constitute them.
Such relations are conveniently presented through syntax-directed rules.
When they pertain to data embodying binding structure, these
specifications naturally tend to be higher-order, \ie, their rule-based
presentation involves the use of contexts.
Moreover, these contexts can contain conditional assertions whose use may
require the construction of sub-derivations.
Towards understanding this issue, consider the alternative notation for \lterms
due to De Bruijn in which bound variables are not named and their occurrences
are represented instead by indexes that count the abstractions up to the one
binding them \cite{debruijn72}.
Using the type \kdtm for the representation of \lterms in this form, we can
encode them via  the constructors $\kdvar : \knat \to \kdtm$ (for variables),
$\kdapp : \kdtm \to \kdtm \to \kdtm$ and $\kdabs : \kdtm \to \kdtm$.
Now, there is a natural bijection between the named and nameless representation
of \lterms.
Writing $\G |- \hodbrel{m,h,d}$ to denote the correspondence between the
\HOAS-encoded term $m$ that occurs at \emph{depth} $h$ (\ie, under $h$
$\lambda$-abstractions) and the De Bruijn term $d$ where $\G$ determines the
mapping between free variables in the two representations, we can define this
relation via these rules:
\begin{gather}
  \linfer{
    \G |- \hodbrel{\kapp~m~n,h,\kdapp~d~e}
  }{
    \G |- \hodbrel{m,h,d} & \G |- \hodbrel{n,h,e}
  }
  \label{rule:hodbrel-app}
  \\[1ex]
  \linfer{
    \G |- \hodbrel{\kabs~(\LAM x. m), h, \kdabs~d}
  }{
    \G, \ALL i, k. ((h + k = i) \supset \hodbrel{x, i, \kdvar~k})
    |- \hodbrel{m,h + 1,d}
  }
  \label{rule:hodbrel-abs}
  \\[1ex]
  \linfer{
    \G |- \hodbrel{x, i, \kdvar~k}
  }{
    \ALL i, k. ((h + k = i) \supset \hodbrel{x, i, \kdvar~k}) \in \G
    &
     |- h + k = i
  }
  \label{rule:hodbrel-var}
\end{gather}
The rule for relating applications is straightforward.
To relate $\kabs~(\LAM x. m)$ to a De Bruijn term at depth $h$,
we must relate each occurrence of $x$ in $m$, which must be at a depth $h +
k$ for some $k > 0$, to the De Bruijn term $\kdvar~k$.
To encode this correspondence, the context is extended in the premise of rule
(\ref{rule:hodbrel-abs}) with a (universally quantified) implicational formula.
Note also that this rule carries with it the implicit assumption that the name
$x$ used for the bound variable is fresh to $\G$, the context for the concluding
judgment.
Eventually, when the \HOAS term on the right of $|-$ is a variable, rule
(\ref{rule:hodbrel-var}) provides the means to complete the derivation by using
the relevant assumption from $\G$.
Observe that the use of this rule entails a construction of an auxiliary
derivation for $|-\, h + k = i$.

Our ultimate interest is in reasoning about such higher-order relational
specifications.
For example, we might be interested in showing that the relation that we have
defined above identifies a bijective mapping between the two
representations of \lterms.
One part of establishing this fact is proving that the relation is deterministic
from left to right, \ie, that every term in the named notation is related to at
most one term in the nameless notation.
Writing $\oseq{\G\, |-\, \hodbrel{m,h,d}}$ to denote derivability of
the judgment $\G\, |-\, \hodbrel{m,h,d}$ by virtue of the rules
(\ref{rule:hodbrel-app}), (\ref{rule:hodbrel-abs}) and (\ref{rule:hodbrel-var}),
this involves providing a proof for the following assertion:
\begin{gather}
  \ALL \G, m, h, d, e.
  \oseq{\G |- \hodbrel{m,h,d}} \rimp
  \oseq{\G |- \hodbrel{m,h,e}} \rimp
  d = e.
  \label{stmt:hodb-det-ltr-bad}
\end{gather}
Note that $\ALL$ and $\rimp$ in (\ref{stmt:hodb-det-ltr-bad}) are logical
constants at the reasoning level in contrast to the ones in
(\ref{rule:hodbrel-app}) -- (\ref{rule:hodbrel-var}) that are at the object
level.
Such a proof must obviously be based on an analysis of derivability using the
rules that define the relation.
To formalize such reasoning, we need a logic that can encode these rules
in a way that allows case analysis to be carried out over their structure.
Furthermore, the logic must embody an induction principle since proofs of general
theorems of the kind we are interested in must be inductive over the structure
of object-level derivations.
A particular difficulty in articulating such inductive arguments relative to
higher-order relational specifications is that they may need to take into
account derivations in the object system that rely on hypotheses in changing
contexts.
For example, a proof of (\ref{stmt:hodb-det-ltr-bad}) must accommodate the fact
that $\G$ can be dynamically extended in a derivation of $\G |- \hodbrel{m,h,d}$
and that the particular content of $\G$ influences the derivation in the
variable case via the rule (\ref{rule:hodbrel-var}).

In this paper we develop a framework that provides an elegant solution to this
reasoning problem.
Formally, our framework is a realization of the two-level logic
approach~\cite{gacek12jar,mcdowell02tocl}, which is based on embedding a
\emph{specification logic} inside a \emph{reasoning logic}.
Within this setup, we take our specification logic to be that of hereditary
Harrop formulas (\HHw).
This logic extends the well-known logic of Horn clauses essentially by employing
simply typed \lterms as a means for representing data objects and by permitting
universal quantification and implications in the bodies of clauses.
As such, it provides an excellent basis for encoding rule-based higher-order
specifications over \HOAS representations~\cite{miller12proghol}.
Moreover, these formulas can be given a proof-theoretic interpretation that
simultaneously is complete with respect to intuitionistic logic and reflects
the structure of derivations based on the object-level rules they encode.
For the reasoning logic we use the system \Gee from~\cite{gacek11ic}.
This logic permits atomic predicates to be defined through clauses in a way
that allows case analysis based reasoning to be carried out over them.
The treatment of definitions in \Gee can also be specialized to interpret them
inductively.
The capability for formally proving properties about relational specifications
is realized in this setting by first encoding \HHw derivability in \Gee via an
inductive definition and then using this encoding to reflect reasoning based on
object-level rules into reasoning over \HHw derivations that formalize these
rules.

The two-level logic approach has previously been implemented in the Abella
system and has been used successfully in several reasoning tasks~\cite{gacek09phd}.
However, the original version of Abella uses a fragment of \HHw that is capable
of treating syntax-directed rules only when the dynamic additions to their
contexts is restricted to atomic formulas.
There is an inherent difficulty in structuring the reasoning
when contexts can be extended with formulas having an implicational structure.
For example, as already noted, case analysis over the derivation of $\G |-
\hodbrel{m,h,d}$ in a proof of (\ref{stmt:hodb-det-ltr-bad}) must take into
account the fact that the derivation can proceed by using a hypothesis that was
dynamically added to $\G$.
Without well-defined constraints on $\G$, it is difficult to predict how such
hypotheses might be used and indeed the assertion may not even be
true.

In the example under consideration, there is an easy resolution to the dilemma
described above.
We are not interested in proving assertion (\ref{stmt:hodb-det-ltr-bad}) for
arbitrary $\G$ but only for those $\G$s that result from additions made through
the rule (\ref{rule:hodbrel-abs}).
The elements of $\G$ must therefore all be of the form $\ALL i, k. ((h + k = i)
\supset \hodbrel{x, i, \kdvar~k})$ where $h$ is some depth and $x$ is some
variable not otherwise present in $\G$.
Moreover, the use of such assumptions in derivations can occur only through rule
(\ref{rule:hodbrel-var}) that is in fact another instance of a backchaining step
that is manifest explicitly in the rules (\ref{rule:hodbrel-app}) and
(\ref{rule:hodbrel-abs}).
Thus, the structure of $\G$ can be encoded into an inductive definition in
$\Gee$ and treated in a finitary fashion by the machinery that $\Gee$ already
provides for reasoning about backchaining steps.

The key insight underlying this paper is that the above observation generalizes
cleanly to other reasoning situations that involve contexts with
higher-order hypotheses.
Concretely, the contexts that need to be considered in these situations are
completely determined by the additions that can be made to them.
Further, the structure of such additions must already be manifest in the
original specifications and can therefore always be encapsulated in an inductive
definition.
To take advantage of this observation we modify the encoding of \HHw derivations
in Abella to support reasoning also over the backchaining steps that result from
using dynamically added assumptions.
We then demonstrate the power of this extension through its use in explicitly
proving the bijectivity property discussed above as well as another non-trivial
property about paths in \lterms and their relation to reduction.
These exercises also show the benefits of using a logic for specifications:
the meta-theoretic properties of this logic
greatly simplify the reasoning process.

In summary, we make three contributions through this work: we propose a
methodology for reasoning about higher-order relational specifications, we
present an implemented system for supporting this methodology and we show
its effectiveness through actual reasoning tasks.
The framework we describe exploits the \HOAS representation style to structure
and simplify the reasoning process.
To the best of our knowledge, the only other systems that use such an approach
to similar effect are Twelf~\cite{pfenning99cade} and
Beluga~\cite{pientka10ijcar}.
In contrast to these systems, the one we develop here provides a rich language
for stating meta-theoretic properties of specifications and an explicit logic
for articulating their proofs.
We elaborate on these comparisons in a later section.

The rest of the paper is structured as follows.
In the next two sections, we present the specification logic \HHw, the reasoning
logic \Gee, and the two-level logic approach that is built out of their
combination.
\secref*{paths} illustrates the use of the resulting framework and the
associated methodology for a novel and non-trivial example.
The focus in this example is on specifications that have a rich higher-order
character and on showing how context definitions and context relations can be
used to structure and realize the reasoning process.
The last two sections discuss related work and conclude the paper by providing a
perspective on its technical contributions.

The extended Abella system that is the outcome of this work is available
at~\cite{abella-hhw}.
Besides the examples described in this paper, this version of Abella also
contains a number of other examples of reasoning about higher-order relational
specifications that illustrate our approach.

\section{The Specification Logic}
\label{sec:hhw}

\noindent%
In this section, we present the specification logic \HHw, show how it can be
used to encode rule-based descriptions, and discuss some of its meta-theoretic
properties that turn out to be useful in reasoning about specifications
developed in it.

\long\def\hhwrules{
\begin{figure}[!t]
  Goal reduction rules
  \begin{smallgather}
    \linfer[{\simp}_R]{
      \Si ; \Th ; \G  |- F \simp G
    }{
      \Si ; \Th ; \G, F |- G
    }
    \qquad
    \linfer[\sandR]{
      \Si ; \Th ; \G |- G_1 \sand G_2
    }{
      \Si ; \Th ; \G |- G_1 &
      \Si ; \Th ; \G |- G_2
    }
    \\[1ex]
    \linfer[{\sfall}_R]{
      \Si ; \Th ; \G |- \sfall_\tau G
    }{
      (c \notin \Si)
      &
      \Si, c{:}\tau ; \Th ; \G |- (G~c)
    }
  \end{smallgather}

  \vspace{-1ex}
  Backchaining rules
  \begin{smallgather}
    \linfer[{\simp}_L]{
      \Si ; \Th ; \G, \foc{G \simp\mkern -2mu F} |- A
    }{
      \Si ; \Th ; \G |- G
      &
      \Si ; \Th ; \G, \foc{F} |- A
    }
    \quad
    \linfer[\sandL]{
      \Si ; \Th ; \G, \foc{F_1 \sand F_2} |- A
    }{
      \Si ; \Th ; \G, \foc{F_i} |- A
    }
    \\[1ex]
    \linfer[\sfall_L]{
      \Si ; \Th ; \G, \foc{\sfall_\tau F} |- A
    }{
      \Si |- t : \tau &
      \Si ; \Th ; \G, \foc{(F~t)} |- A
    }
  \end{smallgather}

  \vspace{-1ex}
  Structural rules
  \vspace{-1ex}
  \begin{smallgather}
    \linfer[\rn{match}]{
      \Si ; \Th ; \G, \foc{A} |- A
    }{}
    \\[1ex]
    \linfer[\rn{prog}]{
      \Si ; \Th ; \G |- A
    }{
      (F \in \Th) &
      \Si ; \Th ; \G, \foc{F} |- A
    }
    \quad
    \linfer[\rn{dyn}]{
      \Si ; \Th ; \G |- A
    }{
      (F \in \G) &
      \Si ; \Th ; \G, \foc{F} |- A
    }
  \end{smallgather}
  \vspace{-0.5cm}
  \caption{Rules for \HHw. In ${\sand}_L$, $i \in \{1, 2\}$.}
  \label{fig:hhw-rules}
  \vspace{-0.35cm}
\end{figure}}

\subsection{The {\large\pmb\HHw} Proof System}
\label{sec:hhw-proof-system}

\noindent%
The logic \HHw of hereditary Harrop formulas is a predicative fragment of
Church's Simple Theory of Types~\cite{church40} whose expressions are simply
typed \lterms.
Types are built freely from primitive types, which must include the type \omic
of formulas, and the function type constructor $\to$.
Terms are built from a user-provided signature of typed constants, and are
considered identical up to $\alpha\beta\eta$-conversion.
We write $\Si |- t : \tau$ to denote that $t$ is a well-formed term of type
$\tau$ relative to $\Si$.
Well-formed terms of type $\omic$ relative to $\Si$ are called
\emph{$\Si$-formulas} or just \emph{formulas} when $\Si$ is implicit.

Logic is introduced into this background via a countable family of constants
containing: ${\simp},{\sand} : \omic \to \omic \to \omic$ (written infix, and
associating to the right and left, respectively), and for every type $\tau$ not
containing \omic, the (generalized) universal quantifier ${\sfall_\tau} : (\tau
\to \omic) \to \omic$.
An atomic formula, denoted by $A$ possibly with a subscript, is one that does
not have a logical constant as its head symbol.
We use the abbreviations $\SFALL x{:}\tau. F$ for $\sfall~(\LAM x{:}\tau. F)$,
$\SFALL x_1{:}\tau_1, \dotsc, x_n{:}\tau_n. F$ for $\SFALL x_1{:}\tau_1. \dotsc
\sfall x_n{:}\tau_n. F$, and $\SFALL \bar x{:}\bar \tau. F$ where $\bar x = x_1,
\dotsc, x_n$ and $\bar\tau = \tau_1, \dotsc, \tau_n$ for $\SFALL x_1{:}\tau_1.
\dotsc \SFALL x_n{:}\tau_n. F$.
We will omit the types when they are irrelevant or can be inferred from context.
Finally, we will often write $G \sif F$ (with ``$\sif$'' associating to the left
and pronounced ``if'') to mean $F \simp G$.

\hhwrules

The \HHw proof system has two kinds of sequents:
\begin{quote}
  \begin{tabular}{l@{\qquad}l}
    $\Si ; \Th ; \G |- G$ & \emph{goal-reduction sequent} \\[1ex]
    $\Si ; \Th ; \G, \foc{F} |- A$ & \emph{backchaining sequent}
  \end{tabular}
\end{quote}
In these sequent forms, $\Si$ is a signature; $\G$ and $\Th$ are multisets of
$\Si$-formulas; $G$ is a $\Si$-formula and $A$ is an atomic $\Si$-formula.
The context $\Th$ is called the \emph{static context} because it contains a
finite and unchanging \HHw \emph{program}.
The context $\G$, called the \emph{dynamic context}, contains the assumptions
introduced during the goal reduction procedure, and can therefore grow.
The members of $\Th$ and $\G$ are called the \emph{static clauses} and the
\emph{dynamic clauses} respectively.

\figref*{hhw-rules} contains the inference rules of \HHw.
Reading the rules as a computation of premise sequents from goal sequents, the
\emph{goal reduction rules} decompose the goal on the right of $|-$ until it
becomes atomic.
The ${\simp}_R$ rule extends the dynamic context with the antecedent of the
implication, while the ${\sfall}_R$ rule extends the signature with a fresh
constant for the universally quantified variable.

Once the goal becomes atomic, the only rules that apply are the final two
structural rules that \emph{select} a backchaining clause.
The \rn{prog} rule selects a static clause, while the \rn{dyn} rule selects a
dynamic clause.
In either case, the premise is a backchaining sequent with the selected clause
indicated by $\foc{-}$.
The \HHw proof system does not prescribe a strategy for selecting clauses, so to
reason about \HHw derivations we will have to consider every possibility.

While the selected clause is non-atomic, the \emph{backchaining rules} are
used to reduce it.
The ${\simp}_L$ rule changes the selection to the succedent of the implication,
moving in the direction of the \emph{head} of the clause, and additionally
checks that the antecedent is derivable.
The ${\sand}_L$ rules change the selection to one of the operands of a $\sand$.
The ${\sfall}_L$ rule changes the selection to some instance of a universally
quantified clause.
When the selected clause has been reduced to atomic form, the corresponding
branch of the proof finishes by the rule \rn{match} which requires that the
atomic clause match the atomic goal.
Therefore, if the right hand side does not match, then this branch of the proof
is invalid and some choice made earlier in the proof needs to be revisited.

In the common case of a clause with the form $\SALL \bar x{:}\bar \tau. G_1
\simp \dotsm \simp G_n \simp A$, the \rn{match} rules and the backchaining rules
compose to give this derived rule:
\begin{smallgather}
  \infer{
    \Si ; \Th ; \G, \foc{\SALL \bar x{:}\bar \tau. G_1 \simp \dotsm \simp G_n \simp A}
    |- [\bar t/\bar x] A
  }{
    \Si |- \bar t : \bar \tau
    &
    \Si ; \Th ; \G |- [\bar t/\bar x] G_1
    & \dotsm &
    \Si ; \Th ; \G |- [\bar t/\bar x] G_n
  }
\end{smallgather}
This derived form can readily be seen as implementing the backchaining
procedure: the goal on the right of $|-$ is matched against the head of a
selected clause, and then new goals are generated corresponding to the body of
the clause.

\subsection{Example: HOAS vs. De Bruijn {\large\pmb\lterms}}
\label{sec:hhw-example}

\noindent%
As a concrete example of a higher-order relational specification in \HHw, let us
consider the example in the introduction of \lterms represented in two different
ways, one with higher-order abstract syntax (\HOAS) and the other using De
Bruijn indexes.
The signature of this specification consists of the following basic types:
$\knat$ (for natural numbers), $\ktm$ (for \HOAS terms) and $\kdtm$ (for De
Bruijn terms), together with the following constants.

\smallskip \noindent \begingroup \small%
\begin{tabular}{l@{\ }|@{\ }l@{\ }|@{\ }l}
  \hfil \knat & \hfil \HOAS (\ktm) & \hfil De Bruijn (\kdtm) \\
  \hline \vrule width 0pt height 1em
  $\kz : \knat$
  &
  $\kapp : \ktm \to \ktm \to \ktm$
  &
  $\kdapp : \kdtm \to \kdtm \to \kdtm$
  \\
  $\ks : \knat \to \knat$
  &
  $\kabs : (\ktm \to \ktm) \to \ktm$
  &
  $\kdabs : \kdtm \to \kdtm$
  \\
  & &
  $\kdvar : \knat \to \kdtm$
\end{tabular}
\endgroup \smallskip

The static context specifies two relations, $\kadd : \knat \to \knat \to \knat
\to \omic$ and $\hodb : \ktm \to \knat \to \kdtm \to \omic$, that define
addition relationally and relate the two encodings of terms at a given depth.
These relations are given by the following static clauses.
\def\Raddz{\ensuremath{R_{\text{addz}}}\xspace}
\def\Radds{\ensuremath{R_{\text{adds}}}\xspace}
\def\Rapp{\ensuremath{R_{\text{app}}}\xspace}
\def\Rabs{\ensuremath{R_{\text{abs}}}\xspace}
\begin{smallalign}
  & \kadd~\kz~X~X. \tag{\Raddz} \\[1ex]
  & \kadd~(\ks~X)~Y~(\ks~Z) \sif \kadd~X~Y~Z. \tag{\Radds} \\[1ex]
  & \hodb~(\kapp~M~N)~H~(\kdapp~D~E) \sif {} \\
  &\TAB \hodb~M~H~D \sand \hodb~N~H~E.
  \tag{\Rapp} \\[1ex]
  & \hodb~(\kabs~M)~H~(\kdabs~D) \sif {} \\
  &\TAB \SALL x. \hodb~(M~x)~(\ks~H)~D \sif {} \\
  &\TAB[2] \SALL i, k. \hodb~x~i~(\kdvar~k) \sif \kadd~H~k~i.
  \tag{\Rabs}
\end{smallalign}
The clauses are written using the standard convention of indicating variables
that are universally quantified using capital letters; that is, the clause
\Raddz stands for $\SALL X. \kadd~\kz~X~X$, \etc\@
The clauses \Rapp and \Rabs provide a transparent encoding of rules
(\ref{rule:hodbrel-app}) and (\ref{rule:hodbrel-abs}) relative to the \HHw proof
system.
Note especially the embedded implication in the body of \Rabs: as we see in more
detail in the example derivation below, when combined with the derived
backchaining and the goal reduction rules, this implication leads to proving a
sequent with an extended dynamic context that closely resembles the premise of
(\ref{rule:hodbrel-abs}).
There is no clause corresponding to rule (\ref{rule:hodbrel-var}); it will arise
from clauses in the dynamic context as part of the backchaining mechanism of
\HHw.

Let $\Si$ be the signature above and $\Th$ be $\Raddz$, $\Radds$, $\Rapp$,
$\Rabs$.
Let us try to show that the term $\LAM x. \LAM y. (y~x)$ corresponds to the De
Bruijn term $\LAM. \LAM. (1~2)$.
This amounts to proving the following \HHw sequent:
\begin{smallalign}
  \Si ; \Th ; {\emp} &|- {} \\
  \hodb~&(\kabs~(\LAM x. \kabs~(\LAM y. \kapp~y~x)))~\kz \\
  &(\kdabs~(\kdabs~(\kdapp~(\kdvar~(\ks~\kz))~(\kdvar~(\ks~(\ks~\kz))))))).
\end{smallalign}
The dynamic context is empty and the goal is atomic, so only the \rn{prog} rule
is applicable.
Selecting \Raddz or \Radds will fail because the heads are different predicates,
and selecting \Rapp will also fail because the first-argument of \hodb is \kabs,
which does not unify with \kapp.
Therefore, the only choice is backchaining \Rabs, which changes the proof
obligation to:
\begin{smallalign}
  \Si, x{:}\knat ; \Th ;
  &(\SALL i, k. \hodb~x~i~(\kdvar~k) \sif \kadd~\kz~i~k) |- {} \\
  \hodb~&(\kabs~(\LAM y. \kapp~y~x))~(\ks~\kz) \\
  &(\kdabs~(\kdapp~(\kdvar~(\ks~\kz))~(\kdvar~(\ks~(\ks~\kz))))).
\end{smallalign}
Attempting to backchain the new dynamic clause using \rn{dyn} will fail because
the new signature constant $x$ does not unify with \kabs.
Hence, the sole possibility that remains is back\-chaining \Rabs again, yielding:
\begin{smallalign}
  \Si, x,y{:}\knat ; \Th ;
  &(\SALL i, k. \hodb~x~i~(\kdvar~k) \sif \kadd~\kz~k~i), \\
  &(\SALL i, k. \hodb~y~i~(\kdvar~k) \sif \kadd~(\ks~\kz)~k~i) |- {} \\
  \hodb~&(\kapp~y~x)~(\ks~(\ks~\kz)) \\
  &(\kdapp~(\kdvar~(\ks~\kz))~(\kdvar~(\ks~(\ks~\kz)))).
\end{smallalign}
Now we can only backchain \Rapp to yield two new proof obligations, the first of
which is:
\begin{smallalign}
  \Si, x,y{:}\knat ; \Th ;
  &(\SALL i, k. \hodb~x~i~(\kdvar~k) \sif \kadd~\kz~k~i), \\
  &(\SALL i, k. \hodb~y~i~(\kdvar~k) \sif \kadd~(\ks~\kz)~k~i) |- {} \\
  \hodb~&y~(\ks~(\ks~\kz))~(\kdvar~(\ks~\kz)).
\end{smallalign}
The only clause that we can select for backchaining is the second dynamic clause
for $y$ using \rn{dyn}; none of the other clauses have a matching head.
This modifies the goal to:
\begin{smallalign}
  \Si, x,y{:}\knat ; \Th ;
  &(\SALL i, k. \hodb~x~i~(\kdvar~k) \sif \kadd~\kz~k~i), \\
  &(\SALL i, k. \hodb~y~i~(\kdvar~k) \sif \kadd~(\ks~\kz)~k~i) |- {} \\
  \kadd~&(\ks~\kz)~(\ks~\kz)~(\ks~(\ks~\kz)).
\end{smallalign}
This sequent is then proved by backchaining \Radds and \Raddz.
The other proof obligation is handled similarly.

\subsection{Meta-theorems of {\large\pmb\HHw}}
\label{sec:hhw-meta}

\noindent%
As a logic, \HHw possesses several properties that can be useful in analyzing
derivability and therefore in reasoning about specifications written in
it. The following meta-theorems will be specifically useful in the examples we
consider.

\begin{theorem}[meta theorems of \HHw] \label{thm:hhw-meta} \mbox{} %
  \begin{ecom}
  \item If $\Si ; \Th ; \G |- F$ and $\Si ; \Th ; \G, F |- G$ are derivable,
    then so is $\Si ; \Th ; \G |- G$ (cut).
  \item If $\Si |- t : \tau$ and $\Si, c{:}\tau ; \Th ; \G |- G$ (where $c$ is
    not free in $\Th$) is derivable, then so is $\Si ; \Th ; [t/c]\G |- [t/c]
    G$, where $[t/c]$ stands for the capture-avoiding substitution of $t$ for
    $c$ (instantiation).
  \item If $\Si ; \Th ; \G |- G$ is derivable, and $F \in \G$ implies $F \in
    \D$, then $\Si ; \Th ; \D |- G$ is also derivable (monotonicity).
  \end{ecom}
\end{theorem}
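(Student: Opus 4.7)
The plan is to prove the three parts by induction on derivations in order of increasing difficulty. For (c), monotonicity is a routine mutual induction on derivations of the two sequent forms. Every inference either leaves the dynamic context untouched at its premises or extends it uniformly on both sides (as ${\simp}_R$ does), so the hypothesis $\G \subseteq \D$ propagates downward. The only rule that inspects the dynamic context substantively is \rn{dyn}: if it selects some $F \in \G$ in the original derivation, the same $F$ lies in $\D$ by assumption, so the rule remains applicable.

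For (b), instantiation also proceeds by a mutual induction on the given derivation, with the substitution $[t/c]$ pushed into each premise. Since $c$ does not occur in $\Th$, the static context is unaffected and the \rn{prog} rule transfers verbatim. The delicate case is ${\sfall}_R$, where a fresh signature constant is introduced; we must choose this new constant distinct both from $c$ and from every constant of $t$ so that substitution and the rule commute. For ${\sfall}_L$, we rely on $\Si |- t : \tau$ to ensure that the instantiating witness $[t/c]u$ remains well-typed under the smaller signature.

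For (a), cut is the main obstacle. The plan is the classical double induction: an outer induction on the logical complexity of the cut formula $F$ and an inner induction on the height of the given derivations. To track $F$ through the backchaining machinery, we simultaneously establish a focused variant of cut: if $\Si ; \Th ; \G |- F$ and $\Si ; \Th ; \G, F, \foc{F'} |- A$ are derivable, then so is $\Si ; \Th ; \G, \foc{F'} |- A$. Most cases are commutative, pushing the cut into the premises via the inner IH and reassembling with the same last rule; monotonicity from (c) is used whenever a rule extends $\G$ so that the cut formula remains available. The crux is the principal case of the focused variant, where \rn{dyn} has selected the cut formula $F$ itself as the focus. Here we invert the derivation of $\Si ; \Th ; \G |- F$ in lockstep with the backchaining decomposition of $F$: each outermost connective of $F$ is stripped by a matched pair of rules and replaced by cuts on strictly smaller formulas. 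An ${\simp}_L$/${\simp}_R$ pair yields cuts on $F_1$ and $F_2$; a ${\sand}_L$/${\sand}_R$ pair yields a cut on one $F_i$; and a ${\sfall}_L$/${\sfall}_R$ pair invokes part (b) to first instantiate the fresh constant of the ${\sfall}_R$ subderivation with the witness supplied by ${\sfall}_L$, producing a cut on $F'~u$ whose outer connective is smaller. When $F$ is atomic, \rn{match} closes the case immediately using the left derivation, and the outer induction on formula complexity discharges all of the above subcases.
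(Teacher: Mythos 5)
Your proposal is correct and follows essentially the same route as the paper, which merely asserts that each part follows by a straightforward inductive argument and defers the details to the formal development referenced in \thmref{hhw-meta-formal}. Your double induction for cut---outer on the cut formula, inner on the derivation, with a focused auxiliary statement for backchaining sequents and an appeal to instantiation in the universal case---is exactly the standard argument that the formalization carries out.
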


\begin{proof}
  Each theorem follows by a straightforward inductive argument.
  See also \thmref{hhw-meta-formal}. \qed
\end{proof}

\noindent%
A direct corollary of the monotonicity theorem is that weakening and contraction
are admissible for the dynamic context.
Observe that the static context $\Th$ never changes, even in the case of cut and
instantiation.
Obviously this theorem holds even if $\Th$ is empty, so a variant proof system
that combines the static and dynamic contexts into a single context will also
enjoy the same properties.
However, when reasoning about the specification of a computational system, we
are almost never interested in considering situations where the static rules of
the system change.

\section{The Two-Level Logic Approach}

\noindent%
We describe now the reasoning logic \Gee and outline the encoding
of \HHw in \Gee that underlies our particular use of the two-level logic
approach. We then illustrate the resulting framework by using it to
formalize and prove the bijectivity property of the relation between \HOAS and
De Bruijn representations of \lterms.

\label{sec:2ll}

\long\def\geerules{
\begin{figure}[!t]
  \small
  \begin{gather*}
    \linfer[\rn{id}]{
      \X ; \D, B ||- B'
    }{
      (B \eqvt B')
    }
    \qquad
    \linfer[\cut]{
      \X ; \D ||- C
    }{
      \X ; \D ||- B &
      \X ; \D, B'\mkern -4mu ||- C &
      (B \eqvt B')
    }
    \\[1ex]
    \linfer[{\rfall}\mkern -2mu_L]{
      \X ; \D, \rfall_\tau B ||- C
    }{
      \X, \Si, \NomC |- t : \tau &
      \X ; \D, B~t ||- C
    }
    \\[1ex]
    \linfer[{\rfall}\mkern -2mu_R]{
      \X ; \D ||- \rfall_\tau~B
    }{
      (h \notin \X) &
      (\bar c = \supp(B)) &
      \X, h ; \D ||- B~(h~\bar c)
    }
    \\[1ex]
    \linfer[{\rnbl}\mkern -3mu_L]{
      \X ; \D, \rnbl\mkern -3mu_\tau~B ||- C
    }{
      (a \in \NomC \setminus \supp(B)) &
      \X ; \D, (B~a) ||- C
    }
    \\[1ex]
    \linfer[{\rnbl}\mkern -3mu_R]{
      \X ; \D ||- \rnbl\mkern -3mu_\tau~B
    }{
      (a \in \NomC \setminus \supp(B)) &
      \X ; \D ||- (B~a)
    }
  \end{gather*}
  \vspace{-2em}
  \caption{Selected rules of \Gee.}
  \label{fig:gee-quant-rules}
  \vspace{-0.35cm}
\end{figure}
}

\subsection{The Reasoning Logic {\large\pmb\Gee}}
\label{sec:gee}

\noindent%
Specifications based on derivation rules are usually given a \emph{closed-world}
reading, where relations are considered to be characterized fully by the rules
that describe them.
For instance, the rules that assign simple types to \lterms can be used not only
to identify types with well-formed terms, but also to argue that a term such as
$\LAM x. x~x$ cannot be typed.
The \HHw logic can be used to realize only the positive part of such
specifications.
To completely formalize the intended meaning of rule-based specifications, we
use the logic \Gee~\cite{gacek11ic} that supports inductive fixed-point
definitions.

The basis for \Gee is also an intuitionistic and predicative version of Church's
Simple Theory of Types.
Types are determined in \Gee as in \HHw except that the type of formulas is
\prop rather than \omic.
We assume a fixed collection $\Si$ of logical and non-logical constants none of
whose members other than the ones mentioned below contains \prop in its argument
types.
The logical constants of \Gee consist initially of $\rtrue$ and $\rfalse$ of
type $\prop$; ${\rand}$, ${\ror}$ and ${\rimp}$ of type $\prop \to \prop \to
\prop$; for every type $\tau$ not containing $\prop$, the quantifiers
$\rfall_\tau$ and $\rexts_\tau$ of type $(\tau \to \prop) \to \prop$; and the
equality symbol $=_\tau$ of type $\tau \to \tau \to \prop$.
To provide the capability of reasoning about \emph{open} \lterms, which is
necessary in many arguments about \HOAS, \Gee also supports
\emph{generic} reasoning.
Specifically, for every type $\tau$ not containing \prop, \Gee includes an
infinite set of \emph{nominal constants} of type $\tau$, and a \emph{generic
  quantifier} ${\rnbl}\mkern -3mu_\tau$ of type $(\tau \to \prop) \to
\prop$~\cite{miller05tocl}.
Like with \HHw, we often omit types and adopt the usual syntactic conventions
for displaying the logical connectives.

The proof system for \Gee is presented as a sequent calculus with sequents of
the form $\X ; \D ||- C$ where $\D$ is a set of formulas (\ie, terms of type
$\prop$), $C$ is a formula, and $\X$ contains the free eigenvariables in
$\D$ and $C$.
The treatment of fixed-point definitions in \Gee results in the eigenvariables
being given an extensional interpretation; in other words, unfolding a
definition on the left may instantiate some of the eigenvariables and introduce
other eigenvariables.
We write $\X, \Si, \NomC |- t : \tau$ to mean that $t$ is a well-formed term of
type $\tau$ all of whose free variables, constants, and nominal constants are
drawn from the respective sets to the left of $|-$.
Here and elsewhere, we use $\NomC$ to denote the collection of all nominal
constants that we assume to be disjoint from the eigenvariables contained in
$\X$ and the (logical and non-logical) constants contained in the signature,
$\Si$.

\geerules

Nominal constants are used to simplify generic judgments in the course of proof
search.
A correct formalization of this idea needs two provisos: that quantifier scopes
be respected and that judgments that differ only in the names of nominal
constants be identified.
\figref*{gee-quant-rules} contains a few rules of \Gee that show how
these conditions are realized; the full system can be found
in~\cite{gacek11ic}.
The essential feature of nominal constants is \emph{equivariance}: two terms $B$
and $B'$ are considered to be equal, written $B \eqvt B'$, if they are
$\lambda$-convertible modulo a permutation of the nominal constants.
We write $\supp(B)$---called the \emph{support} of $B$---for the (finite)
collection of nominal constants occurring in $B$.
The rules for $\rnbl$ are the same on both sides of the sequent; in each case a
nominal constant that doesn't already exist in the support of the principal
formula is chosen to replace the $\rnbl$-quantified variable.
In the ${\rfall}_R$ rule of \figref{gee-quant-rules}, the eigenvariable is
\emph{raised} over the support of the principal formula; this is needed to
express permitted dependencies on these nominal constants in a situation where
later substitutions for eigenvariables will not be allowed to contain them.
Note, however, that nominal constants may be used in witnesses in the
${\rfall}\mkern -2mu_L$ rule.

To accommodate fixed-point definitions, \Gee is parameterized by sets of
\emph{definitional clauses}.
Each such clause has the form $\ALLx \bar x. (\NABx \bar z. A) \rdef B$ where
$A$ is an atomic formula (called the \emph{head}) whose free variables are drawn
from $\bar x$ and $\bar z$, and $B$ is an arbitrary formula (called the
\emph{body}) whose free variables are also free in $\NABx \bar z. A$.
Each clause partially defines a relation named by the predicate in the head.
In every definitional clause $\ALLx \bar x. (\NABx \bar z. A) \rdef B$, we
require that $\supp(\NABx \bar z. A)$ and $\supp(B)$ are both empty.
Consistency of \Gee also requires predicate occurrences in the body of a clause
to also satisfy certain \emph{stratification} conditions, explained
in~\cite{gacek11ic}.

\Gee also includes special rules for interpreting definitional clauses.
When an atom occurs on the right of a sequent, then any of the clauses with a
matching head may be used to replace the atom by the corresponding body of the
clause; in other words, clauses may be backchained.
Matching the head of a clause requires some care with regard to the
quantifiers.
To match the head of a clause $\ALLx \bar x. (\NABx \bar z. A) \rdef B$ against
the atom $A'$, we look for a collection of distinct nominal constants $\bar c$
and witness terms $\bar t$ that do not
contain any of the elements of $\bar c$ such that $[\bar t/\bar x, \bar c/\bar
z]A \eqvt A'$.
If these can be found, then $A'$ is replaced on the right by $[\bar
t/\bar x]B$.
When an atom $A$ occurs on the left in a sequent, for every clause and every
way of \emph{unifying} the head of the clause to that atom, a new premise is
created with the corresponding body added to the context.
This amounts to a \emph{case analysis} over the clauses in a definition.
Note that substitutions into the clause must respect the order of the $\ALLx$
and $\NABx$ quantifiers at its head and that different unifiers may result from
considering different distinct nominal constant instantiations for the $\NABx$
quantifiers.
Some of the eigenvariables may be instantiated in the premises thus created so
the eigenvariable context should be modified to reflect the resulting changes.

The final crucial component derived from \Gee that we use in this paper is the
ability to mark certain predicates as being \emph{inductive}, whereby the set of
clauses for that predicate is interpreted as a least fixed-point definition.
When deriving a sequent of the form
\begin{smallgather}
  \X ; \D ||- \ALLx \bar x.
  F_1 \rimp \dotsm \rimp A \rimp \dotsm \rimp F_n \rimp G
\end{smallgather}
by induction on the atom $A$, \Gee produces this premise:
\begin{smallgather}
  \X, \bar x ; \D,
  (\ALLx \bar x. F_1 \rimp \dotsm \rimp A^* \rimp
     \dotsm \rimp F_n \rimp G), \\
  F_1, \dotsc, A^@, \dotsc, F_n ||- G
\end{smallgather}
Here, $A^*$ and $A^@$ are simply annotated versions of $A$ standing for
\emph{strictly smaller} and \emph{equal sized} measures respectively.
If $A^@$ is unfolded using a definitional clause, the predicates in the body of
the corresponding clause are given the $^*$ annotation; thus, the inductive
hypothesis (containing $A^*$) only becomes usable after at least one unfolding
of $A^@$.
For each following use of induction, a new set of annotations is produced (\eg,
$^{**}$ and $^{@@}$).
This use of annotations is justified by using $\lambda \bar x. F_1 \rimp
\dotsm \rimp A \rimp \dotsm \rimp F_n \rimp G$ as inductive invariant in a more
general (and also more abstract) rule that codifies a least fixed-point
treatment of the definition of $A$.
A formal development of the connection and a correctness argument
can be found in~\cite{gacek09phd}.

\subsection{Encoding {\large\pmb\HHw} in {\large\pmb\Gee}}
\label{sec:encoding-hhw}

\begin{figure*}[!tb]
  \begin{smallgather}
    \begin{array}[t]{l@{\ }c@{\ }l}
      \kseq~L~(G_1 \sand G_2)
      &\rdef&
      \kseq~L~G_1 \rand \kseq~L~G_2
      \\
      \kseq~L~(F \simp G)
      &\rdef&
      \kseq~(F \kcons L)~G
      \\
      \kseq~L~(\sfall_\tau~G)
      &\rdef&
      \NABx x{:}\tau. \kseq~L~(G~x)
      \\[1ex]
      \kseq~L~A
      &\rdef&
      \katomic~A \rand
      \kmember{F}{L} \rand \kbc~L~F~A
      \\
      \kseq~L~A
      &\rdef&
      \katomic~A \rand
      \kwd{prog}~F \rand \kbc~L~F~A
      \\[2ex]
      \kbc~L~(F_1 \sand F_2)~A
      &\rdef&
      \kbc~L~F_1~A \ror
      \kbc~L~F_2~A
      \\
      \kbc~L~(G \simp F)~A
      &\rdef&
      \kseq~L~G \rand
      \kbc~L~F~A
      \\
      \kbc~L~(\sfall_\tau~F)~A
      &\rdef&
      \EX t{:}\tau. \kbc~L~(F~t)~A
      \\[1ex]
      \kbc~L~A~A
      &\rdef&
      \rtrue
    \end{array}
  \end{smallgather}
  \caption{Encoding of \HHw rules as inductive definitions in \Gee.}
  \label{fig:hhw-encoding}
\end{figure*}

\noindent%
The logic \Gee has the necessary ingredients to represent the \HHw proof system
as an inductive definition.
Formally, the type \omic of \HHw is imported as an \emph{uninterpreted} type in
\Gee.
Two \HHw formulas $H, G : \omic$ may be compared only for syntactic equality (or
unifiability) in \Gee; in other words, the \Gee formula $H = G$ does not check
for logical equivalence (in \HHw) of $H$ and $G$.
The connectives of \HHw are thus treated as \emph{constructors} of \omic, so $(F
\simp G) = (F' \simp G')$ in \Gee would entail that $F = F'$ and $G =G'$ because
of congruence (\ie, injectivity of constructors), and $(F \simp G) = G$ would
not hold, even if $F$ were known to be derivable in \HHw, because $F \simp G$
and $G$ are not unifiable.

To encode \HHw sequents in \Gee, we first note that \Gee and \HHw share the same
type system.
The \HHw signature can therefore be imported transparently into \Gee, so the
signatures of \HHw sequents will not be explicitly encoded.
The contexts of \HHw are represented in \Gee as lists of \HHw formulas (\ie,
lists of terms of type \omic).
The type \olist with constructors $\knil : \olist$ and $(\kcons) : \omic \to
\olist \to \olist$ is used for these lists, and, per tradition, the $\kcons$
constructor is written infix.
Membership in a context is defined inductively as a predicate $\kwd{member} :
\omic \to \olist \to \prop$ with these clauses:
\begin{smallalign}
  \kmember{E}{(E \kcons L)} &\rdef \rtrue \\
  \kmember{E}{(F \kcons L)} &\rdef \kmember{E}{L}.
\end{smallalign}
Observe that the two clauses have overlapping heads; there will be as many ways
to show $\kmember{E}{L}$ as there are occurrences of $E$ in $L$.
This validates the view of \HHw contexts as multisets.

The sequents of \HHw are then encoded in \Gee using the predicates \kseq and
\kbc.
\begin{quote}
  \begin{tabular}{l@{\qquad}l@{\qquad}l}
    \hfil \HHw & \hfil \Gee & \hfil notation \\
    \hline \vrule width 0pt height 1em
    $\Th ; \G |- G$ & $\kseq~L~G$ & $\oseq{L |- G}$ \\
    $\Th ; \G, \foc{F} |- A$ & $\kbc~L~F~A$ & $\oseq{L, \foc{F} |- A}$
  \end{tabular}
\end{quote}
Here, $L$ is an \olist representation of $\G$.
The third column contains a convenient and evocative notation for the equivalent
\Gee atom in the second column; we shall often use this notation in the rest of
this paper.
Note that while the \HHw contexts are unordered multisets, the \olist
representations are ordered.
This is not a limitation because we will always reason about the contexts using
$\kwd{member}$.

The static program clauses in $\Th$ are not part of the \Gee encoding of
sequents.
Rather, we use the inductively defined predicate $\kprog : \omic \to \prop$ that
has one clause of the form $\kprog~F \rdef \rtrue$ for each $F \in \Th$.

The rules of the \HHw proof system in \figref{hhw-rules} are used to build
mutually inductive definitions of the $\kseq$ and $\kbc$ predicates.
This definition is depicted in \figref{hhw-encoding}; each clause of the
definition corresponds to a single rule of \HHw.
The goal reduction rules are systematically translated into the clauses, the
only novelty being that universally quantified variables of the specification
logic are represented as nominal constants in \Gee using the $\nabla$
quantifier.
This use of $\nabla$ is necessary because the encoding must completely
characterize provability in \HHw.
In particular, in \HHw the sequent $\emp ; (\SALL x. \kwd{eq}~x~x) |- \SALL y,
z. \kwd{eq}~y~z$ is \emph{not} derivable, meaning that the \Gee formula
$\kseq~((\SALL x. \kwd{eq}~x~x) \kcons \knil)~((\SALL y, z. \kwd{eq}~y~z) \rimp
\rfalse)$ should be true.
This is achievable since it unfolds to $(\NAB y, z. \kseq~((\SALL x.
\kwd{eq}~x~x) \kcons \knil)~(\kwd{eq}~y~z)) \rimp \rfalse$.
As a point of comparison, if we were to use this clause instead:
\begin{smallgather}
\kseq~L~(\sfall_\tau\ G) \rdef \ALL x{:}\tau. \kseq~L~(G~x)
\end{smallgather}
then the non-derivability property of the \HHw sequent above, now encoded as
$(\ALL y, z. \kseq~((\SALL x. \kwd{eq}~x~x) \kcons \knil)~(\kwd{eq}~y~z)) \rimp
\rfalse$, would not be true. (In particular, the antecedent is satisfiable in
models with only a single inhabitant.)

The backchaining rules of \HHw are encoded as clauses of \kbc in a
straightforward manner.

For the structural rules of \HHw, we have to enforce the invariant that the
right hand side of the sequent is atomic.
This is achieved by means of a predicate $\katomic : \omic \to \prop$
defined by the following clause:
\begin{smallalign}
  \katomic~F
  &\rdef \bigl(\ALL G. (F = \sfall_\tau\ G) \rimp \rfalse\bigr) \\
  &\rand \bigl(\ALL G_1, G_2. (F = (G_1 \sand G_2)) \rimp \rfalse\bigr) \\
  &\rand \bigl(\ALL G_1, G_2. (F = (G_1 \simp G_2)) \rimp \rfalse\bigr).
\end{smallalign}
Effectively, \katomic characterizes atomic formulas negatively by saying
that an atomic formula cannot be constructed with a \HHw connective.
It is important to note that there is a small issue with all three of \kseq,
\kbc, and \katomic: they treat $\sfall_\tau\ G$ as if it were a single
object, but, since the reasoning and specification logics share the type system,
it actually stands for all instances for the type $\tau$.
To keep these definitions finite, we would require polymorphism, which \Gee
currently lacks.
In the Abella implementation, therefore, these definitions are treated
specially.
Note that the meta-theory of \Gee does not require that inductive definition
have finitely many clauses, so even an infinitary interpretation of the clauses
of \figref{hhw-encoding}, as was done in~\cite{gacek12jar}, is compatible
with our approach.

The faithfulness of our encoding allows us to state and prove known properties
of \HHw in \Gee.
For example, the meta-theoretic properties discussed in
\thmref{hhw-meta} have the following counterparts relative to the
encoding in \Gee.
Having proved them in \Gee, we can use the \cut rule to invoke them as lemmas in
arguments concerning particular specifications.

\begin{theorem}\label{thm:hhw-meta-formal}%
  The earlier discussed meta-theoretic properties of \HHw are validated by their
  encoding in \Gee.
  In other words, each of the following is provable in \Gee.
  \begin{ecom}
  \item $ \ALL L, F, G. \oseq{L |- F} \rimp \oseq{L, F |- G} \rimp \oseq{L |-
      G}$ (cut).
  \item $\ALL L, G. \NABx x. \oseq{L~x |- G~x} \rimp \ALL t. \oseq{L~t |- G~t}$
    (instntiation).
  \item $\ALL L, L', G. \oseq{L |- G} \rimp (\ALL F. \kmember{F}{L} \rimp
    \kmember{F}{L'}) \rimp \oseq{L' |- G}$ (monotonicity).
  \end{ecom}
\end{theorem}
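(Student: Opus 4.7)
The plan is to prove each of the three properties by induction inside \Gee, mirroring the standard structural/cut-elimination arguments for \HHw but carried out against the inductive definition of \kseq and \kbc shown in \figref{hhw-encoding}. Because the two predicates are mutually defined, each proof will be set up as a (nested or mutual) induction on their derivations.

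For monotonicity, I would do a mutual induction on $\oseq{L \vdash G}$ and $\oseq{L, \foc{F} \vdash A}$, carrying the membership-mapping hypothesis $\forall F. \kmember{F}{L} \rimp \kmember{F}{L'}$ as an extra premise. Most cases simply apply the induction hypothesis under the same mapping. The ${\simp}_R$ case (which extends the dynamic context) requires lifting the mapping from $L$ and $L'$ to $F_0 \kcons L$ and $F_0 \kcons L'$; this is immediate from the two defining clauses of \kmember. The ${\sfall}_R$ case introduces a new nominal constant, but since \kmember is independent of such constants the hypothesis transports unchanged. The \rn{dyn} case is exactly where the hypothesis is consumed: from $\kmember{F}{L}$ we derive $\kmember{F}{L'}$ and rebuild the sequent on the new context.

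For instantiation, the key observation is that $\nabla x.\ \oseq{L~x \vdash G~x}$ asserts derivability of a family of sequents parameterized by a fresh nominal constant, and every rule of \HHw acts uniformly on such a constant. I would induct on the $\kseq/\kbc$ derivation, showing that replacing the nominal constant by an arbitrary term $t$ preserves each rule: the only non-trivial points are the ${\sfall}_R$ rule, where the freshly introduced nominal constant must remain distinct from any nominal constant in $t$ (handled by choosing the new constant outside $\supp(t)$ along the lines of the ${\rnbl}_R$ rule in \figref{gee-quant-rules}), and the ${\sfall}_L$ rule, where one simply substitutes inside the witness.

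Cut is the hardest. I would follow the usual cut-elimination pattern: an outer induction on the structure of the cut formula $F$ together with an inner induction on the derivation of $\oseq{L, F \vdash G}$. Almost all cases are permutation cases, where the last rule in the right derivation does not act on $F$ and the cut can simply be pushed inward by the inner IH; monotonicity is used freely to re-cast contexts that grow through ${\simp}_R$. The crux is the principal case, where the right derivation finishes with the \rn{dyn} rule selecting precisely the newly added $F$ (via $\kmember{F}{F \kcons L}$): here we must combine the left derivation of $\oseq{L \vdash F}$ with a backchaining derivation $\oseq{L, \foc{F} \vdash A}$ to produce $\oseq{L \vdash A}$. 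This requires a companion lemma, also proved by induction on $F$, that eliminates a cut against a focused clause; the outer induction on $F$ is what makes this lemma available on the strictly smaller body/head formulas produced by ${\simp}_L$, ${\sand}_L$, and ${\sfall}_L$. In \Gee, the layered induction is realized through the $^*$/$^@$ annotation mechanism described for inductive predicates, with a separate induction (and annotations) opened for the inner lemma. The main obstacle, unsurprisingly, is getting this inductive structure right: specifically, arranging the two inductions so that the principal \rn{dyn} case can legitimately invoke the cut lemma on subformulas of $F$ while all other \rn{dyn} and \rn{prog} cases are dispatched by the inner IH on a smaller derivation.
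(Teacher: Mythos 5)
Your proposal is correct and is essentially the paper's own (largely unstated) argument: the paper simply records that these are ``fairly straightforward inductive theorems'' discharged formally in the Abella development, and your inductions on the mutually defined $\kseq$/$\kbc$ derivations for monotonicity and instantiation are exactly those proofs. Your treatment of cut --- the lexicographic induction on the cut formula and the right-hand derivation, with a companion lemma for the principal case where \rn{dyn} selects the cut formula and the backchaining rules decompose it --- is the standard and correct structure, and your closing remark about realizing the outer ``induction on $F$'' through the annotation machinery (since \Gee cannot induct directly on the syntax of an \omic-term) correctly identifies the only delicate point in the formalization.
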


\begin{proof}
  These are fairly straightforward inductive theorems of \Gee.
  We have proved them formally in the Abella~\cite{abella-hhw} implementation of
  \Gee; the proofs can be found in the file \url{hh_meta.thm}.
  %
\end{proof}

\subsection{Example: HOAS vs. De Bruijn Revisited}
\label{sec:hodb}

\noindent%
We are now in a position to formally verify that the relation presented in the
introduction between the encodings of the named and nameless representations of
\lterms actually specifies an isomorphism.
We do this by showing that its rendition in \HHw described in
\secref{hhw-example} is deterministic in both its first and
third arguments.
As expected, we work within \Gee with the encoding of \HHw described in the
previous section.
We also assume that the (static) clauses \Raddz, \Radds, \Rapp, and \Rabs have
been reflected into the definition of \kprog in this context.

As mentioned in the introduction, we will need to finitely characterize the
possible dynamic context extensions during the derivation of \hodb.
The inductive definition of these dynamic contexts of \hodb has the following
pair of clauses.
\begin{smallalign}
  &\kctx~\knil \rdef \rtrue \\
  \Bigl(\NABx x. &\kctx~((\SALL i, k.
  \hodb~x~i~(\kdvar~k) \sif \kadd~H~k~i
  ) \kcons L) \Bigr) \rdef \kctx~L.
\end{smallalign}
As usual, the capitalized variables $H$ and $L$ are universally quantified over
the entire clause.
Note the occurrence of $\nabla\mkern -2mu x$ at the head of the second clause of
the definition: it guarantees that $x$ does not occur in $H$ or $L$.
Therefore, in any $L$ for which $\kctx~L$ holds, it must be the case that
there is exactly one such dynamic clause for each such $x \in \supp(L)$.
It is easy to establish this fact in terms of a pair of lemmas.

The first of these lemmas characterizes the dynamic clauses.
\begin{smallgathertagged}
  \begin{split}
   \ALL L, E. \kctx~&L \rimp \kmember{E}{L} \rimp \\
  \TAB \EX x, H. E &= \big(\SALL i, k. \kldb~x~i~(\kdvar~k\bigr)
    \sif \kadd~H~k~i\bigr) \\
  &\rand \kname{x}.
  \end{split}
  \label{lab:hodb-ctx-inv}
\end{smallgathertagged}
Here, $\kname{x}$ is a predicate that asserts that $x$ is a nominal constant;
this predicate can be defined in \Gee with the clause $(\NABx x. \kname{x})
\rdef \rtrue$.
To prove (\ref{lab:hodb-ctx-inv}), we proceed by induction on the first
hypothesis, $\kctx~L$.
As mentioned in \secref{gee}, this is achieved by assuming a new
\emph{inductive hypothesis} $\IH$:
\begin{smallgather}
  \begin{split}
   \ALL L, E. (\kctx~&L)^* \rimp \kmember{E}{L} \rimp \\
  \TAB \EX x, H. E &= \big(\SALL i, k. \kldb~x~i~(\kdvar~k\bigr)
    \sif \kadd~H~k~i) \\
  &\rand \kname{x}.
  \end{split}
  \tag{\IH}
\end{smallgather}
Moreover, the proof obligation is modified to the following \Gee sequent, where
$L$ and $E$ are promoted to eigenvariables, and the assumptions of the lemma are
converted to hypotheses.
\begin{smallmultline}
  L, E ;
  (\kctx~L)^@, \kmember{E}{L} ||- {} \\
  \TAB \EX x, H. E = \big(\SALL i, k. \kldb~x~i~(\kdvar~k\bigr)
  \sif \kadd~H~k~i) \rand \kname{x}.
\end{smallmultline}

The \IH cannot be immediately used because the annotations of $\kctx~L$ do not
match.
To make progress, the definition of $\kctx~L$ needs to be \emph{unfolded}.
As explained in \secref{gee}, this amounts to finding all ways of unifying
$\kctx~L$ with the heads of the clauses in the definition of $\kctx$.
The complete set of unifiers is characterized by $L = \knil$ and $L = (\SALL i,
k. \hodb~\kn~i~(\kdvar~k) \sif \kadd~H~k~i) \kcons L'$ for new eigenvariables
$H$ and $L'$ and a nominal constant $\kn$.
In the latter case we also have a new hypothesis, $(\kctx~L')^*$, that comes
from the body of the second clause for $\kctx$.
There are two things to note: first, the $\nabla$ at the head of the second
clause of $\kctx$ is turned into a nominal constant in the proof obligation, and
the second is that the new hypothesis in the second case is annotated with $^*$,
which suits the \IH.

In each case for $L$, the argument proceeds by analyzing the second hypothesis,
$\kmember{E}{L}$.
The case of $L = \knil$ is vacuous, because there is no way to infer
$\kmember{E}{\knil}$, making that hypothesis equivalent to false.
In the case of $L = (\SALL i, k. \hodb~\kn~i~(\kdvar~k) \sif \kadd~H~k~i) \kcons
L'$, we have two possibilities for $\kmember{E}{L}$: either
\[E = (\SALL i, k.
\hodb~\kn~i~(\kdvar~k) \sif \kadd~H~k~i),\]
or $\kmember{E}{L'}$.
The former possibility is exactly the conclusion that we seek, so this branch of
the proof finishes.
The latter possibility lets us apply \IH to the hypotheses $(\kctx~L')^*$ and
$\kmember{E}{L'}$, which also yields the desired conclusion.

The second necessary lemma asserts that there is at most a single clause for
each variable in the dynamic context.
\begin{smallgathertagged}
  \begin{split}
    & \ALL L, x,\ H_1, H_2. \kctx~L \rimp \\
    &\TAB \kmember{(\SALL i, k. \kldb~x~i~(\kdvar~k) \sif \kadd~H_1~k~i  )}{L} \rimp \\
    &\TAB \kmember{(\SALL i, k. \kldb~x~i~(\kdvar~k) \sif \kadd~H_2~k~i)}{L} \rimp \\
    &\TAB H_1 = H_2.
  \end{split}
  \label{lab:hodb-ctx-sync}
\end{smallgathertagged}
Note that from $H_1 = H_2$, we are able to conclude that the two dynamic clauses
relating $x$ to a De Bruijn term must be the same.
Like the previous lemma, it is proved by induction on the hypothesis $\kctx~L$.

Armed with these lemmas, we can then show both directions of determinacy for
\hodb.
In the forward direction the statement is as follows.
\begin{smallmultline}
  \ALL L, M, H, D, E.
  \kctx~L \rimp \\
  \oseq{L |- \hodb~M~H~D} \rimp
  \oseq{L |- \hodb~M~H~E} \rimp D = E.
\end{smallmultline}
We prove this by induction on $\oseq{L |- \hodb~M~H~D}$; this amounts to
assuming the lemma \IH below:
\begin{smallalign}
  \ALL L, M, H, D, E. & \kctx~L \rimp
  \oseq{L |- \hodb~M~H~D}^* \rimp \\
  &\TAB \oseq{L |- \hodb~M~H~E} \rimp D = E \tag{\IH}
\end{smallalign}
and proving the \Gee sequent
\begin{smallalign}
  L, M, H, D, E ; {} &\kctx~L, \oseq{L |- \hodb~M~H~D}^@, \\
  &\TAB \oseq{L |- \hodb~M~H~E} ||- D = E.
\end{smallalign}

Now, $\oseq{L |- \hodb~M~H~D}^@$ is just a notation for the \Gee atom
$\kseq~L~(\hodb~M~H~D)^@$ whose definition is given by the clauses in
\figref{hhw-encoding}.
Unfolding the definition amounts to finding all the clauses in
\figref{hhw-encoding} whose heads match \[\kseq~L~(\hodb~M~H~D).\]
Only the final two clauses of \kseq, corresponding to the rules \rn{dyn} and
\rn{prog} of \HHw, are therefore relevant.

Let us consider backchaining the static clauses first, \ie, the applications of
the \rn{prog} rule.
There are only a finite number of them, so the assumption $\kprog~F$ can be
immediately turned into a branched tree with one case for every static program
clause.
For the first static clause, we are left with a new assumption:
\begin{smallalign}
  \Bigl\{
  L, \Bigl[
  & \SALL M', N', H', D', E'. \hodb~(\kapp~M'~N')~H'~(\kdapp~D'~E') \sif \\
  & \TAB \hodb~M'~H'~D' \sand \hodb~N'~H'~E'
  \Bigr] |- \hodb~M~H~D
  \Bigr\}^*
\end{smallalign}
The annotation $^*$ here was obtained from unfolding the definition of a
$^@$-annotated atom per the technique outlined in \secref{gee}.
Note that this is just a backchaining sequent ($\kbc$) whose definition in
\figref{hhw-encoding} can be unfolded.
Doing this instantiates the $\sfall$ prefix in the bakchaining clause in such a
way that the head $\hodb~(\kapp~M'~N')~H'~(\kdapp~D'~E')$ unifies with the \HHw
formula on the right, $\hodb~M~H~D$; this produces the substitutions $M =
\kapp~M'~N'$, $H = H'$, and $D = \kdapp~D'~E'$ for fresh eigenvariables $M', N',
H', D', E'$.
Moreover, by the second clause for \kbc in \figref{hhw-encoding}, we get
this goal reduction sequent as a fresh hypothesis:
\begin{smallgather}
  \oseq{L |- \hodb~M'~H'~D' \sand \hodb~N'~H'~E'}^*
\end{smallgather}
which is reduced by the first clause for \kseq to:
\begin{smallgather}
  \oseq{L |- \hodb~M'~H'~D'}^* \text{ and } \oseq{L |- \hodb~N'~H'~E'}^*
\end{smallgather}

We can almost apply the induction hypothesis \IH---we know $\kctx~L$ and
$\oseq{L |- \hodb~M'~H'~D'}^*$ already---but we still must find the third
argument.
To get this argument we need to case analyze the other hypothesis, $\oseq{L |-
  \hodb~M~H~E}$, which becomes $\oseq{L |- \hodb~(\kapp~M'~N') ~H' ~E}$ as a
result of the previous unification.
It has no size annotations because the induction was on the first hypothesis.
Nevertheless, we can perform a case analysis of its structure by unfolding its
definition (using the clauses in \figref{hhw-encoding}).
Once again, we have a choice of using a static program clause or a dynamic
clause from $L$.
If we use a static clause, then by a similar argument to the above we will get
the following fresh hypotheses, for new eigenvariables $D''$ and $E''$ such that
$E = \kdapp~D''~E''$:
\begin{smallgather}
  \oseq{L |- \hodb~M'~H' ~D''} \text{ and } \oseq{L |- \hodb~N'~H'~E''}
\end{smallgather}
We can now apply the \IH twice, yielding $D' = D''$ and $E' = E''$, so $D =
\kdapp~D'~E' = \kdapp~D''~E'' = E$.

If, on the other hand, we use a dynamic clause in $L$, then the two fresh
hypotheses we get are:
\begin{smallgather}
  \kmember{F}{L} \text{ and } \oseq{L, \foc{F} |- \hodb~(\kapp~M'~N')~H~E}.
\end{smallgather}
for some new eigenvariable $F$.
This is the first place where the context characterization hypothesis $\kctx~L$
becomes useful.
By Lem.~(\ref{lab:hodb-ctx-inv}) above, we should be able to conclude that $F$
is of the form $\big(\SALL i, k. \kldb~\kn~i~(\kdvar~k) \sif \kadd~\tilde
H~k~i\bigr)$ for some term $\tilde H$ and nominal constant $\kn$.
By looking at the clauses for \kbc in \figref{hhw-encoding}, it is clear
that there is no way to prove the sequent $\oseq{L, \foc{F} |-
  \hodb~(\kapp~M'~N')~H~E}$, because the term $\kn$ will never unify with
$\kapp~M'~N'$.
Hence this hypothesis is vacuous, which closes this branch.
We have now accounted for all the cases of backchaining a static clause for the
inductive assumption $\oseq{L |- \hodb~M~H~D}^@$\kern -3pt.

This leaves only the dynamic clauses in $L$---which are backchained using the
\rn{dyn} rule---which corresponds to the following pair of new hypotheses:
\begin{smallgather}
  \kmember{F}{L}^* \text{ and }
  \oseq{L, \foc{F} |- \hodb~M~H~D}^*
\end{smallgather}
As these hypotheses come from unfolding an inductive assumption, they are
$^*$-annotated.
Once again, we can apply lem.~(\ref{lab:hodb-ctx-inv}) to conclude that
\begin{smallgather}
  F \quad = \quad \big(\SALL i, k. \kldb~\kn~i~(\kdvar~k) \sif \kadd~H~k~i\bigr)
\end{smallgather}
where \kn denotes a nominal constant,
We then continue using the definitional clauses for \kbc to get the fresh
assumption
\begin{smallgather}
  \oseq{L |- \kadd~H~k~i}^*
\end{smallgather}
for new eigenvariables $k$ and $i$, and the equations $M = \kn$ and $D = \kdvar~
k$.
Then, since \kn does not occur in the static clauses $\Th$, the only way to
prove the second hypothesis $\oseq{L |- \hodb~\kn~H~E}$ would be to use a
dynamic clause in $L$.
Once again, by lem~(\ref{lab:hodb-ctx-inv}) and unfolding the definition of \kbc
as above, we see that this clause must have been of the form $\SALL i, k.
\kldb~\kn~i~(\kdvar~k) \sif \kadd~\hat H~k~i$ for some eigenvariable $\hat H$.
We can now use the other lemma~(\ref{lab:hodb-ctx-sync}) to show that $ H = \hat
H$.
Hence, $\oseq{L |- \hodb~\kn~H~E}$ backchains on the same clause in $L$ as
$\oseq{L |- \hodb~\kn~H~D}$, so it must be that $E = \kdvar~k$ as well, \ie, $D
= E$.

\begin{figure}[!tb]
  \bgroup \small
  \underline{\texttt{Theorem}} \texttt{ctx\_inv} :
  $\ALL L, E. \kctx~L \rimp \kmember{E}{L} \rimp {}$ \\
  \TAB $\EX x, H. E = (\SALL i, k. \hodb~x~i~(\kdvar~k) \sif \kadd~H~k~i) \rand
  \kname~x$.
  \\
  \underline{\texttt{Theorem}} \texttt{ctx\_unique} :
  $\ALL L, x, H_1, H_2. \kctx~L \rimp {}$ \\
  \TAB $\kmember{(\SALL i, k. \hodb~x~i~(\kdvar~K) \sif \kadd~H_1~k~i)}{L} \rimp{}$ \\
  \TAB $\kmember{(\SALL i, k. \hodb~x~i~(\kdvar~K) \sif \kadd~H_2~k~i)}{L}
  \rimp H_1 = H_2$.
  \\
  \underline{\texttt{Theorem}} \texttt{add\_det2} :
  $\ALL L, X, Y_1, Y_2, Z. \kctx~L \rimp {}$ \\
  \TAB $\oseq{L |- \kadd~X~Y_1~Z} \rimp \oseq{L |- \kadd~X~Y_2~Z}
  \rimp Y_1 = Y_2$. \\
  \underline{\texttt{Theorem}} \texttt{hodb\_det3} :
  $\ALL L,M,D,E,H. \kctx~L \rimp {}$ \\
  \TAB
  $\oseq{L |- \hodb~M~H~D} \rimp
  \oseq{L |- \hodb~M~H~E} \rimp
  D = E$.
  \\\ttfamily
  induction on 2.\ intros cH dH eH. dcH:case dH. \\
  \TAB \textrm{\itshape \% case of $M = \kapp~M_`~M_2$} \\
  \TAB ecH:case eH.\\
  \TAB[2] apply IH to cH dcH ecH. \\
  \TAB[2] apply IH to cH dcH1 ecH1.\@ search. \\
  \TAB[2] bcH:apply ctx\_inv to cH. case bcH. case ecH. \\
  \TAB \textrm{\itshape \% case of $M = \kabs~M'$} \\
  \TAB ecH:case eH.\\
  \TAB[2] apply IH to \_ dcH ecH. search. \\
  \TAB[2] bcH:apply ctx\_inv to cH ecH1.\ case bcH. case ecH. \\
  \TAB \textrm{\itshape \% backchaining on $L$} \\
  \TAB bcH:apply ctx\_inv to cH dcH1.\ aH:case dcH. \\
  \TAB[2] ecH:case eH. case bcH. case bcH. \\
  \TAB[2] uH:apply ctx\_inv to cH bH1.\ case uH. \\
  \TAB[2] bH:case ecH. apply ctx\_unique to cH dcH1 ecH1. \\
  \TAB[2] apply add\_det2 to cH aH bH search.
  \egroup
  \caption{Abella proof that \hodb is deterministic in its third argument}
  \label{fig:hodb-proof}
\end{figure}

This proof, which has been explained here in some detail, is concisely expressed
using the tactics language of Abella~\cite{abella-hhw} as shown in
\figref{hodb-proof}.
Induction and case analyses are indicated explicitly using the \kwd{induction}
and \kwd{case} tactics, while lemmas are applied using the \kwd{apply} tactic.
The \kseq and \kbc definitions are used implicitly by the \kwd{case} and
\kwd{search} tactics; in particular \kwd{case} handles reasoning on backchaining
sequents.
The tactics language of Abella therefore remains unchanged from earlier versions
that were designed to support only second-order hereditary Harrop formulas.

The \hodb relation is also deterministic in its first argument---\ie, given a De
Bruijn indexed term, there is at most a single \HOAS term it corresponds
to---which is proved in a similar fashion.
Thus, the \hodb relation is manifestly an isomorphism between the two
representations of \lterms.
If we were to specify the translations functionally, then we would not only have
to repeat the clauses for both directions of the translation, but we would also
have to prove separately that they are injective and inverses.
We do not sacrifice any of the executable power of a functional specification:
the program \hodb is directly executable in the language
$\lambda$Prolog~\cite{nadathur88iclp}.

\section{Case Study: Relating Marked Reduction to Lambda Paths}
\label{sec:paths}

\noindent%
The example of the previous section was simple enough that the dynamic context
could always be characterized directly by an inductive definition.
In the general case, we will need to prove properties about a collection of
higher-order relations where each relation has its own separate form of dynamic
context.
We will therefore need to generalize unary definitions such as $\kctx$ of the
previous section to \emph{context relations} of higher arities.
This section contains a case study of such an example, which is independently
novel.

The example is drawn from~\cite[Sec.~7.4.2]{miller12proghol} and involves a
structural characterization of reductions on \lterms.
A \emph{path} through a \lterm is a way to reach any non-binding occurrence of a
variable in the term~\cite[Sec.~4.2]{miller12proghol}.
In \HHw, we can use a basic type $\kp$ for paths with the following
constructors: $\kleft,\kright : \kp \to \kp$ to descend to the function or the
argument sub-trees in an application, and $\kbind : (\kp \to \kp) \to \kp$ to
descend through a $\lambda$-abstraction.
Crucially, $\kbind$ has the same binding structure as the $\lambda$-abstractions
encountered along the path.
The predicate $\kpath : \ktm \to \kpath \to \omic$ asserts that a given \lterm
contains a given path; it is defined by the following three \HHw clauses.
\begin{smallalign}
  &\kpath~(\kapp~M~N)~(\kleft~P) \sif \kpath~M~P. \\
  &\kpath~(\kapp~M~N)~(\kright~P) \sif \kpath~N~P. \\
  &\kpath~(\kabs~M)~(\kbind~P) \sif \\
  &\TAB \SALL x, p. \kpath~(M~x)~(P~p) \sif \kpath~x~p.
\end{smallalign}
As these paths record the specific structure of a \lterm, $\beta$-reduction
changes the paths in the term.
On the other hand, a path through \emph{the result of reducing} $\kapp\app
(\kabs\app (\LAM x. M x))\app N$ would be a path through $M\app x$ with the
additional proviso that any path through $N$ is also a path through $x$.
Paths are a useful tool for structural characterization of terms.
For instance, if two terms have the same paths, then they must be identical;
this corresponds to the following theorem of \Gee:
\begin{smallgathertagged}
  \begin{split}
    & \ALL M, N. \\
    &\TAB (\ALL P. \oseq{{} |- \kpath~M~P} \rimp \oseq{{} |- \kpath~N~P}) \rimp \\
    &\TAB[2] M = N.
  \end{split}
  \label{lab:paths-same-nored}
\end{smallgathertagged}
This theorem is provable in the version of Abella described in~\cite{gacek12jar}
that only has the second-order fragment of \HHw as its specification logic.

Unfortunately, this structural characterization is not preserved by
$\lambda$-conversion.
Suppose we want to compute the paths in a term that result from reducing certain
marked $\beta$-redexes.
Formally, we can add a new constructor for marked redexes, $\kbeta : (\ktm \to
\ktm) \to \ktm \to \ktm$ with the understanding that $\kbeta\app M\app N$
denotes the same \lterm as $\kapp\app (\kabs\app M)\app N$, except that the redex
is marked.
We can then define a relation $\kbred : \ktm \to \ktm \to \omic$ that reduces
all the marked $\beta$-redexes in a term, with the following clauses.
\begin{smallalign}
  &\kbred~(\kapp~M~N)~(\kapp~U~V) \sif \kbred~M~U \sand \kbred~N~V. \\
  &\kbred~(\kabs~M)~(\kabs~U) \sif \\
  &\TAB \SALL x. \kbred~(M~x)~(U~x) \sif \kbred~x~x. \\
  &\kbred~(\kbeta~M~N)~V \sif \\
  &\TAB \SALL x. \kbred~(M~x)~V \sif \SALL u. \kbred~x~u \sif \kbred~N~u.
\end{smallalign}
We also need a static clause for a path in a marked redex.
\begin{smallalign}
  &\kpath~(\kbeta~M~N)~P \sif \\
  &\TAB \SALL x. \kpath~(M~x)~P \sif \SALL q. \kpath~x~q \sif \kpath~N~q.
\end{smallalign}
Since different terms can have the same paths as long as they reduce to the same
term, the theorem (\ref{lab:paths-same-nored}) will need to be updated to
account for reduction.
That is, if two terms have the same paths, then they are \emph{joinable} by
\kbred:
\begin{smallgathertagged}
  \begin{split}
    &\ALL M, N, U, V. \\
    &\TAB (\ALL P. \oseq{{} |- \kpath~M~P} \rimp \oseq{{} |- \kpath~N~P}) \rimp \\
    &\TAB[2]  \oseq{{} |- \kbred~M~U} \rimp \oseq{{} |- \kbred~N~V} \rimp U = V.
  \end{split}
  \label{lab:paths-joinable}
\end{smallgathertagged}

How would one prove (\ref{lab:paths-joinable})?
Note that there are two different higher-order predicates: proofs of $\kbred\app
M\app U$ will add dynamic clauses involving $\kbred$, while proofs of
$\kpath\app M\app P$ will add dynamic clauses involving $\kpath$.
We would like to prove that $\kbred$ preserves $\kpath$, so the statement of the
theorem would have to account for proofs of both kinds, and hence for both kinds
of dynamic clauses.
The general technique in \Gee for such situations is to \emph{relate} the two
kinds of dynamic contexts for the two different relations.
The following definition of $\kctxx : \olist \to \olist \to \prop$ achieves
this.
\begin{smallalign}
  & \kctxx~\knil~\knil \rdef \rtrue \\
  & \Bigl(
  \NABx x, p. \kctxx~(\kbred~x~x \kcons K)~(\kpath~x~p \kcons L)
  \Bigr) \rdef \kctxx~K~L ; \\
  & \Bigl(\NABx x.
     \kctxx~((\SALL u. \kbred~N~u \simp \kbred~x~u) \kcons K) \\
     & \TAB[2] \TAB ((\SALL p. \kpath~N~p \simp \kpath~x~p) \kcons L)
  \Bigr) \rdef \kctxx~K~L.
\end{smallalign}

It is important to note that the $\kctxx$ predicate not only says how two such
contexts are related, but also contains a specification of the contexts themselves.
A hypothesis $\kctxx~K~L$ where $L$, say, is not used elsewhere in the theorem
is equivalent to assuming just that $K$ is a dynamic context for $\kbred$.
As before, the $\rnbl$-bound variables at the head guarantee that every such
variable has a unique dynamic clause in both contexts, which we can establish
separately using a lemma.

The proof of (\ref{lab:paths-joinable}) now proceeds as follows: first we note
that if $\kbred~M~N$, then a path in $M$ must also be in $N$ and \emph{vice
  versa}.
Then, we separately show that if $\kbred~M~N$, then it must be that $N$ is free
of any subterms involving $\kbeta$.
Finally, we prove the lemma that if two $\kbeta$-free terms have the same paths,
then they must be identical, which is essentially the same theorem as
(\ref{lab:paths-same-nored}).

Let us consider the first of these lemmas: that $\kbred$ preserves $\kpath$.
In the \Gee encoding of \HHw, the statement of the theorem is:
\begin{smallgathertagged}
  \begin{split}
    & \ALL K, L, M, U, P. \\
    &\TAB \kctxx~K~L \rimp \oseq{K |- \kbred~M~U} \rimp \\
    &\TAB[2] \oseq{L |- \kpath~M~P} \rimp \oseq{L |- \kpath~U~P}.
  \end{split}
  \label{lab:paths-preserve-l2r}
\end{smallgathertagged}
This theorem is proved by induction on $\oseq{K |- \kbred~M~U}$.
Just as in the inductive proofs in \secref{hodb}, there will be some cases
for backchaining static program clauses and some for dynamic clauses.
The static cases are fairly straightforward, so we concentrate below on the
dynamic cases.

Per the definition in \figref{hhw-encoding}, backchaining a dynamic clause
for $\oseq{K |- \kbred~M~U}$ produces the new hypotheses:
\begin{smallgather}
  (\kmember{E}{K})^* \quad \text{and}\quad \oseq{K, \foc{E} |- \kbred~M~U}^*
\end{smallgather}
for some eigenvariable $E$.
From $\kctxx~K~L$ and $\kmember{E}{K}$, it must follow that:
\begin{smallalign}
  & (\EX X. (E = \kbred~X~X) \rand \kname X) \\
  &\TAB \ror (\EX N, X. E = (\SALL u. \kbred~X~u \sif \kbred~N~u) \rand \kname X)
\end{smallalign}
which is itself proven (as a lemma) by induction on the hypothesis $\kctxx~K~L$.
We therefore need to consider only these two cases for the dynamic clause $E$.

The first case where $E = \kbred~X~X$ is easy to prove.
For the second case, we are left with the following problem: although we
can characterize the cases for $E$, this is not enough to reason about $\kpath$
because $E$ is a dynamic clause for $\kbred$.
This is where we use the fact that $\kctxx$ is a relation to prove the following
lemma.
\begin{smallgathertagged}
  \begin{split}
    & \ALL K, L, N. \NABx n.\ \kctxx~(K~n)~(L~n) \rimp \\
    &\TAB \kmember{(\SALL u. \kbred~n~u \sif \kbred~N~u)}{(K~n)} \rimp \\
    &\TAB[2]
    \kmember{(\SALL q. \kpath~n~q \sif \kpath~N~p)}{(L~n)}
  \end{split}
  \label{lab:paths-ctx2-sync}
\end{smallgathertagged}
Its proof is by induction on the hypothesis $\kctxx~K~L$.
It can be seen as a kind of translation between the formal relation, given as an
inductive definition, to a way of reasoning about the elements of the related
contexts.
The lemma (\ref{lab:paths-ctx2-sync}) states, in particular, that a dynamic
clause about reduction of marked redexes in the dynamic contexts for $\kbred$
must have a corresponding dynamic clause for paths through a marked redex in the
dynamic contexts for $\kpath$.

We now have nearly everything to finish the proof of
(\ref{lab:paths-preserve-l2r}).
The only remaining wrinkle is that in the case where the term $M$ is a variable
that unifies with a nominal constant $n$, we will need to look up its dynamic
clause in a suitable dynamic context and continue by back\-chaining it.
This amounts to the following \emph{inversion lemma}:
\begin{smallgathertagged}
  \begin{split}
    & \ALL K, L, N, P. \NABx n.\ \kctxx~(K~n)~(L~n) \rimp \\
    &\TAB \kmember{(\SALL q. \kpath~n~q \sif \kpath~N~q)}{(L~n)} \rimp \\
    &\TAB[2] \oseq{(L~n) |- \kpath~n~P} \rimp \oseq{(L~n) |- \kpath~N~P}.
  \end{split}
  \label{lab:paths-inversion}
\end{smallgathertagged}
Effectively, this lemma says that the only way that $\oseq{(L~n) |- \kpath~n~P}$
could have been proved is by backchaining on the given clause, which has the
premise $\oseq{(L~n) |- \kpath~N~P}$.
We can show this lemma because we have completely characterized the dynamic
context $(L~n)$, and the static program has no clauses with nominal constants.
Note that the nesting order of $\forall$ and $\nabla$ is crucial here: the
nominal constant $n$ must not be allowed to occur in $N$.
However, it is obviously allowed to occur in the dynamic context, so we indicate
this by means of an explicit dependency, indicated here using the application
$(L~n)$.
This punning between the two levels is possibly because \HHw and \Gee are both
based on a common \lcalculus.
The proof of (\ref{lab:paths-preserve-l2r}) can now be completed by using
(\ref{lab:paths-inversion}) for the variable case.

The full development of this example in Abella, including the formal proofs, can
be found in \url{examples/hhw/breduce.thm} in the Abella
distribution~\cite{abella-hhw}.
%

\section{Related Work}
\label{sec:related}

\noindent%
The \HHw proof system presented in \secref{hhw} is largely similar to the
focused sequent calculus \LJF~\cite{liang09tcs} for the fragment of
intuitionistic logic containing implication, universal quantification,
negatively polarized atoms, and the negatively polarized variant of conjunction.
It is also straightforwardly a version of the calculus formalizing \emph{uniform provability}~\cite{miller91apal}.
The term ``logic of hereditary Harrop formulas'' is often used to indicate an
extended logic where disjunction and existential quantification are also allowed
in a limited form~\cite[chap.~3]{miller12proghol}.
Specifications in the full language with these connectives can be compiled into
our \HHw language, possibly with an increase in the number of static clauses in
the specifications.

Representational techniques for data with binding can be broadly classified into
two styles: first-order and higher-order.
Regardless of style, a primary requirement of the representation is that it not
distinguish between terms that are $\alpha$-equivalent.
The traditional first-order approach to realizing this requirement is to
represent bound variables by De Bruijn indexes, which yields canonical
representatives of $\alpha$-equivalence classes of \lterms.
A very different first-order alternative to De Bruijn indexes is the approach of
nominal logic that forgoes canonical representatives of the $\alpha$-equivalence
classes; instead, two terms are considered identical if they are
\emph{equivariant}, meaning that the names used in one term can be permuted to
the names in the other.
This approach is the basis of Nominal Isabelle~\cite{cheney08toplas}, and there
are also a number of libraries for programming with nominal data, such as Fresh
OCaml~\cite{gabbay03icfp} and Alpha Prolog~\cite{cheney04iclp,urban05tlca}.

A drawback with first-order representations, whether of the De Bruijn or the
nominal logic kind, is that they typically do not offer support for binding
related notions beyond $\alpha$-equivalence.
Typical reasoning applications require a realization of operations such as
substitution and analysis of syntactic structure that respects binding.
With first-order approaches, these have to be implemented explicitly and the
reasoning process must also show their correctness.
In particular, the operation of substitution of a term for a free variable,
which is at the heart of much of the meta-theory of deductive systems, requires
careful book-keeping and fairly detailed correctness arguments (see
\eg~\cite{polonowski13itp} for a recent example done in Coq).
In contrast, higher-order representations reflect binding constructs into the
meta-level abstraction operation and thereby absorb arguments about the
correctness of binding related operations into a one-time argument, external to
the object-level reasoning task, about the correctness of the the meta-language
implementation.

Besides Abella, there are three other systems designed to reason about
specifications in \HOAS: \Hybrid~\cite{felty12jar},
\Beluga~\cite{pientka10ijcar}, and \Twelf~\cite{pfenning99cade}.
All of these systems are broadly two-level or nested systems, but they make
different choices for the specification and reasoning formalisms.
Of these, only \Hybrid is integrated with popular existing formal reasoning
systems (Coq and Isabelle), which allows it to leverage the trusted kernels of
the existing systems instead of implementing new trusted components.
On the other hand, \Hybrid is limited to the second-order hereditary Harrop
fragment for the specification level (which makes it similar in this respect to
the earlier version of Abella described in~\cite{gacek12jar}) and does not have
support for generic reasoning.
The second-order restriction is significant when reasoning about higher-order
deductive systems: the dynamic clauses of higher-order specifications must be
named and transferred to the static program beforehand.
For example, the \kpath predicate in the second-order fragment requires an
auxiliary predicate \kjump and the following clauses for marked redexes.
\begin{smallalign}
  &\kpath~(\kbeta~M~N)~P \sif
  \SALL x. \kpath~(M~x)~P \sif \kjump~x~N. \\[1ex]
  &\kpath~X~P \sif \kjump~X~N \sand \kpath~N~P.
\end{smallalign}
Writing such auxiliary predicates is not only error-prone and anti-modular, but
they also complicate reasoning about the relations.
For instance, in \HHw it is a direct consequence of cut that $\oseq{L,
  \kpath~N~q |- \kpath~x~q}$ and $\oseq{L, (\SALL q. \kpath~x~q \sif \kpath~N~q)
  |- G}$ imply $\oseq{L |- G}$.
However, for the second-order encoding above, the fact that $\oseq{L, \kpath~N~q
  |- \kpath~x~q}$ and $\oseq{L, \kjump~x~N |- G}$ imply $\oseq{L |- G}$ would
need a separate inductive proof.

\Beluga and \Twelf both use the \LF dependent type theory for their
specification languages.
It is known that \LF specifications can be systematically and faithfully
translated into \HHw~\cite{felty90cade,snow10ppdp}.
The encoding of an \LF signature in \HHw uses higher-order features
pervasively, and, indeed, was an early motivation for the present work of
supporting reasoning over higher-order specifications in Abella.
The main difference between \LF and \HHw is their type systems, which directly
affects their reasoning principles.
Briefly, \LF encourages a ``combined contexts'' reasoning approach, while \HHw
encourages a ``context relations'' approach.
Because \LF is dependently typed, the dynamic signature extensions for
universally quantified goals cannot be separated from other assumptions; in
fact, contexts in \LF are interpreted as ordered.
It is difficult to place the same \LF term in two different
contexts.

In both \Beluga and \Twelf, therefore, the most direct way to reason about
different higher-order relations is to use a common dynamic context for the
relations.
This is achieved formally by specifying contexts \emph{schematically} by means
of regular grammars, and using \emph{subordination} analysis on the signature to
determine when one regular context may be \emph{subsumed} by another.
For example, since there is no way to embed a \lterm value inside a \knat value
using the provided constructors, and the clauses for \kadd do not mention
\lterms, it must stand to reason that properties of \kadd must hold even in a
context of assumptions about \lterms.
Such subsumption properties are often useful; for examples, in the example of
\secref{hodb}, if the required properties of \kadd are used in a non-empty
dynamic context, we must separately prove that earlier theorems still hold, such
as the theorem \texttt{add\_det2} in \figref{hodb-proof}.
In Abella, context definitions are no different from any other inductive
definition; there is no automatic subsumption of context relations and such
lemmas must be proven manually.
On the other hand, reasoning about contexts is not part of the trusted base of
Abella, and many properties about arbitrary context relations can be separately
proved and used in a modular fashion, as we have done in the examples in
\secref{hodb} and \ref{sec:paths}.

The differences between the Abella approach and that of \Twelf and \Beluga taken
together can be summarized by the following observations.
Firstly, \Twelf and \Beluga make many kinds of reasoning about context
membership, such as (\ref{lab:hodb-ctx-inv}), automatic and available to the user
for free.
Explicit reasoning about context members in Abella can be tedious, so it is
conceivable that some aspects of the context reasoning of \Twelf and \Beluga can
be imported into Abella in the future.
In particular, theorems such as (\ref{lab:hodb-ctx-inv}),
(\ref{lab:hodb-ctx-sync}), and (\ref{lab:paths-ctx2-sync}) have entirely
predictable proofs that should be easy to automate.

Secondly, the reasoning logic \Gee has a well-developed proof-theory that
includes a sequent calculus with a cut-admis\-sibility result~\cite{gacek11ic}.
This logic has a number of features: an equality predicate at all types, generic
reasoning, and both inductive and co-inductive fixed-point definitions.
\Twelf's \MTwoPlus meta-logic also has a sequent calculus with proof-terms, and
the consistency of this logic is proved by giving the proof terms an operational
semantics and verifying that they represent total functions under this
interpretation.
\Beluga (as of version 0.5) supports only inductive reasoning in terms of
recursive fixed-points, and does not support co-induction.
\Twelf supports inductive reasoning for $\Pi_0^1$ theorems, but also has no
support for co-induction.
Neither \Twelf nor \Beluga has a built-in equality predicate.
For generic reasoning, \Beluga's contextual modal types can achieve many of the
same goals as the $\rnbl$ quantifier of $\Gee$, but the global nature of nominal
constants and equivariant unification makes it possible to reason about open
terms with free variables, unaccompanied by any contexts~\cite{accattoli12cpp}.
Much of the informal meta-theory of the \lcalculus uses open terms in this
style, but a first order representation of variables requires an explicit
treatment of $\alpha$-equivalence and substitution.
The $\rnbl$ quantifier lets us combine the benefits of \HOAS and reasoning on
open terms.

\def\kistm{\kwd{is\_tm}}

Finally, the type systems of \Twelf and \Beluga are endowed with an associated
natural induction principle that allows reasoning by induction on the structure
of well-typed terms.
In Abella, typing is not treated as a definition, so if one wants to induct on
the structure of \lterms, for example, one would have to use a well-formedness
predicate $\kistm : \ktm \to \omic$ with the following clauses:
\begin{smallalign}
  &\kistm~(\kapp~M~N) \sif \kistm~M \sand \kistm~N. \\
  &\kistm~(\kabs~M) \sif \SALL x. \kistm~(M~x) \sif \kistm~x.
\end{smallalign}
Then, whenever one needs to reason by induction on the structure of a term $M$,
one reasons instead on $\oseq{{} |- \kistm~M}$.
Because such predicates essentially reify the well-typedness relation, they will
generally need higher-order clauses if the types of the constructors are
higher-order.
For instance, the \kabs constructor has a second-order type and requires a
second-order clause for \kistm.
Note that such definitions cannot be made in the reasoning logic \Gee because
they are not stratified, \ie, to prove $\kistm~M$, one needs to make assumptions
of the form $\kistm~x$.
It is of course possible to automatically generate \HHw predicates like \kistm
for a given \HHw signature, but in any theorem that involves inductive reasoning
on the structure of terms one would still need to make hypotheses such as
$\oseq{{} |- \kistm~M}$ explicit.
Note that $\ALL M{:}\ktm. \oseq{{} |- \kistm~M}$ is not a theorem of \Gee.

\section{Conclusion}
\label{sec:conclusion}

\noindent%
We have presented an extension to the two-level logic approach that lets one use
the full richness of \HHw to specify and formally reason about higher-order
deductive formalisms.
The essence of our method is characterizing the contexts of these higher-order
formalisms as inductive relations, and a variant of the backchaining procedure
that allows us to use properties of these inductive characterizations in a
modular way.
We have validated our design and methodology by implementing an extended Abella
system and by using it to develop a number of non-trivial examples of reasoning
over higher-order specifications.

\medskip

\noindent%
\emph{Acknowledgments}:
We thank Dale Miller, Olivier Savary-B{\'e}langer and the anonymous reviewers
for helpful discussions and comments on earlier drafts.
This work has been partially supported by the NSF Grants OISE-1045885 (REUSSI-2)
and CCF-0917140 and by the INRIA Associated Team RAPT.
Opinions, findings, and conclusions or recommendations expressed in this paper
are those of the authors and do not necessarily reflect the views of the
National Science Foundation.

\end{document}